
\documentclass[conference,letterpaper]{IEEEtran}

\addtolength{\topmargin}{9mm}

%
%
\usepackage[utf8]{inputenc} 
\usepackage[T1]{fontenc}
\usepackage{url}
\usepackage{ifthen}
\usepackage{cite}
\usepackage{graphicx}
\usepackage[cmex10]{amsmath} 


\interdisplaylinepenalty=2500 

\hyphenation{op-tical net-works semi-conduc-tor}

\usepackage[acronyms, nonumberlist, nopostdot, nomain, nogroupskip]{glossaries}
\usepackage{amsmath}
\usepackage{amsthm}
\usepackage{amsfonts}
\usepackage{bbm}
\usepackage{xcolor}

\newtheorem{theorem}{Theorem}
\newtheorem{lemma}{Lemma}

\newtheorem{definition}{Definition}
\newtheorem{remark}{Remark}

\newacronym{ldp}{LDP}{large deviation principle}
\newacronym{mdp}{MDP}{Markov decision process}

\newcommand{\advhyp}{\set H ^\text{adv}}

\newcommand{\set}{\mc}
\newcommand{\thetaadv}{\theta^{\text{adv}}}
\newcommand{\Padv}{P^{\text{adv}}}
\newcommand{\Tadv}{T^{\text{adv}}}
\newcommand{\mc}{\mathcal}

\DeclareMathOperator{\rank}{rank}

\begin{document}
\title{Covert Adversarial Actuators in Finite MDPs} 


\author{%
  \IEEEauthorblockN{Edoardo David Santi, Gongpu Chen, Deniz Gündüz}
  \IEEEauthorblockA{Department of Electrical and Electronic Engineering \\
                    Imperial College London\\
                    London, UK\\
                    \{eds17, gongpu.chen, d.gunduz\}@ic.ac.uk}
  \and
  \IEEEauthorblockN{Asaf Cohen}
  \IEEEauthorblockA{The School of Electrical and Computer Engineering\\
                    Ben-Gurion University of the Negev\\ 
                    Beer-Sheva, Israel\\
                    coasaf@bgu.ac.il}
}

\maketitle


\begin{abstract}
   We consider a \gls{mdp} in which actions prescribed by the controller are executed by a separate actuator, which may behave adversarially. At each time step, the controller selects and transmits an action to the actuator; however, the actuator may deviate from the intended action to degrade the control reward. Given that the controller observes only the sequence of visited states, we investigate whether the actuator can covertly deviate from the controller’s policy to minimize its reward without being detected. We establish conditions for covert adversarial behavior over an infinite time horizon and formulate an optimization problem to determine the optimal adversarial policy under these conditions. Additionally, we derive the asymptotic error exponents for detection in two scenarios: (1) a binary hypothesis testing framework, where the actuator either follows the prescribed policy or a known adversarial strategy, and (2) a composite hypothesis testing framework, where the actuator may employ any stationary policy. For the latter case, we also propose an optimization problem to maximize the adversary’s performance.
\end{abstract}
\section{Introduction}
%
%
There are many scenarios in which an adversary can be stopped, or would suffer negative consequences if identified. Hence, the adversary may wish to limit its adversarial behaviour to stay covert. Conversely, there are scenarios where an agent wishes to complete a task while staying covert so that an adversary cannot interfere or extract information from it.
A setting of this type is the problem of covert communication \cite{towsley, lizhong, block_covert_comms}. It involves designing communication protocols that enable a sender to transmit messages to a receiver without being detected by an adversary monitoring the communication channel. Adversaries in control systems extend these ideas by interfering with the operation of controllers to make their tasks harder. The goal is to influence the state or behaviour of a dynamic system (e.g., a robot, autonomous vehicle, or networked system) while remaining stealthy: avoiding anomalies that might trigger alarms or reveal the presence of external interference. Attacking control systems also has to account for their dynamic and feedback-driven nature, making it a more complex task. 

\begin{figure}[t] \label{fig}
	\centering
	\includegraphics[width=1.8in]{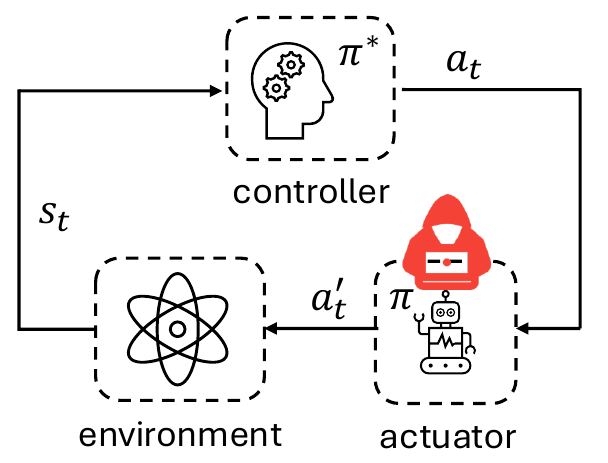}
	\caption{System model. The controller wants the actuator to follow policy $\pi^*$; however, the compromised actuator instead aims to minimize the reward without being detected.}
	\label{fig:model}	
 \vspace{-0.5em}
\end{figure} 

One type of attack involves corrupting the observations of the monitor or the controller of the system to distort its knowledge about the system state, and to derail its actions \cite{mo2009secure, smith2011decoupled, mo2010false, guo2017optimal, 7563349}. In another class of attacks, the adversary corrupts the control signal \cite{bai2015security, bai2017data,zhang2016stealthy,kung2017performance,weerakkody2016information,asaf1,asaf2}. 
In \cite{chang2023distributed}, the problem of command corruption is studied in multi-armed bandits, where the arms pulled are different from those commanded. In other settings, an agent may wish to stay covert while completing a task. In \cite{blochcovertsensing}, the authors consider an active sensing scenario, where the goal of the adversary is to estimate a parameter without being detected. In \cite{chang2022covert}, covert best arm detection is studied in a multi-armed bandit setting. In \cite{ufuk}, the authors study a reachability problem in an \gls{mdp}, where the objective is to limit the probability of an adversary observing the states to infer their transition probabilities under the chosen policy.

In this paper, we consider a controller interacting with the environment by controlling the actions of an actuator, formulated as an \gls{mdp}. The adversary either takes over the actuator itself, or infiltrates the communication channel between the controller and the actuator, corrupting the instructions received by the actuator. The adversary aims to modify controller's policy to degrade the system's performance, e.g., to minimize the long-term average reward, while avoiding detection. See Fig.~\ref{fig} for an illustration of the system model. We assume that the system dynamics, i.e., the state transition probabilities and reward function, and the current state are known by both the controller and the adversarial actuator, while the average reward is learned only at the end. Hence, the goal of the adversary is to corrupt the actions taken without significantly distorting the statistics of the visited states to avoid detection.

\noindent \textbf{Notation:} We denote random variables with upper case letters and realizations with the associated lower case letters. For $n,k \in \mathbbm{Z}^+$, we use $s_n^{n+k}$ to represent the sequence $\{s_n,s_{n+1},\ldots,s_{n+k}\}$. $s_1^n$ will be abbreviated as $s^n$ for simplicity and 
for any symbol, `$_{s^n}$' subscript indicates an empirical estimate using sequence $s^n$. For any set $\mathcal{X}$, $\Delta(\mathcal{X})$ denotes the set of probability distributions over $\mathcal{X}$, while $\mathcal{X}^0$ denotes its interior and $\overline{\mathcal{X}}$ its closure, both in the total variation metric. $\mathbbm{1}\{\cdot \}$ is the indicator function. For two distributions $p$ and $q$ over $\set{X}$ we define the relative entropy as $H(p\|q)=\sum_{i \in \set{X}}p(i)\log\frac{p(i)}{q(i)}$. We adopt $\log=\log_2$. 
\section{Problem Formulation}
\subsection{System model}
An \gls{mdp} is defined by the tuple $(\set{S},\set{A},T,r,\mu)$, where $\set{S}$ and $\set{A}$ are the state and action spaces,  respectively, $T:\set{S}\times \set{A}\times \set{S}\to [0,1]$ is the transition kernel, $r:\set{S}\times \set{A}\to \mathbb{R}$ is the reward function, and $\mu \in \Delta(\set{S})$ is the initial state distribution. In this work, we focus on \gls{mdp}s with finite state and action spaces.
At every time $t$, the controller selects an action $a_t\in \mathcal{A}$ based on the current state $s_t$ and transmits $a_t$ to the actuator. Upon the execution of $a_t$ by the actuator, the environment generates a reward $r(s_t,a_t)$ and transitions to the next state $s_{t+1}$ following the distribution $T(\cdot|s_t,a_t)$. We assume that the controller cannot observe the instantaneous reward at each time step; instead, only the average reward is revealed at the very end.
A stationary policy for the MDP is a mapping that maps the current state to a distribution over actions. We denote by $\pi(a_t|s_t)$ the probability that policy $\pi$ selects action $a_t$ in state $s_t$. The expected average reward of policy $\pi$ over an infinite time horizon is defined as
\begin{align*}
    J(\pi) := \lim_{N\to\infty} \frac{1}{N} \mathbb{E}_{\pi} \left[\sum_{t=1}^N r(s_t,a_t)|s_1\sim \mu \right].
\end{align*}
Let $T_\pi$ denote the transition matrix induced by policy $\pi$, where each entry is given by $T_\pi(s,s')=\sum_{a} T(s'|s,a)\pi(a|s)$.
Throughout this paper, we assume that the MDP is recurrent, meaning that the Markov chain induced by any stationary policy consists of a single recurrent class. As a result, under any stationary policy $\pi$, the system reaches a unique stationary state distribution, denoted by $\Bar{\theta}_\pi$. This stationary distribution satisfies $\tau_{\pi}^{\top}T_\pi=\tau_{\pi}^{\top}$.
Moreover, let $\theta_\pi (s,s')=\tau_\pi (s) T_\pi (s,s'),\;\forall s,s' \in \set{S}$, be the stationary distribution over state transitions induced by policy $\pi$. 

This work investigates a scenario where the controller wants to follow a predefined policy $\pi^*$, while the adversarial actuator may not faithfully execute the actions prescribed by $\pi^*$. Fig. \ref{fig:model} provides an illustration of the system. For simplicity, we refer to the actuator and the adversary interchangeably as a single entity. We assume that the adversary has full knowledge of the control policy $\pi^*$ and that both the controller and the adversary have access to the environment state at every time. Consequently, at every time $t$, the controller prescribes an action $a_t$ using policy $\pi^*$; however, the adversary may override this action and instead execute an action $a_t'$, selected according to an adversarial policy $\pi$. To quantify the impact of adversarial behavior, we define the regret induced by the adversarial policy $\pi$ as:
\begin{align*}
    R(\pi) = J(\pi^*) - J(\pi)~.
\end{align*}
The objective of the adversary is to find a policy $\pi$ that maximizes the regret covertly, as being found out might result in negative repercussions. On the other hand, the controller aims to detect with the best possible accuracy whether the actuator is acting faithfully or not.
%
%
%

\subsection{Detection and Covertness}
Since the controller has full access to the entire state sequence up to the current time, it can leverage this information to detect any abnormal behavior based on observed state transitions. In this part, we discuss the controller’s ability to detect whether the control policy $\pi^*$ is executed faithfully.

We formulate the controller's detection problem as a hypothesis test, where the null hypothesis $\set H ^*$ corresponds to the actuator acting faithfully to the controller's policy, i.e. $\pi=\pi^*$, while the alternative hypothesis $\advhyp$ corresponds to adversarial behaviour, i.e. $\pi \neq \pi^*$. For any symbol, let `$^*$' and `$^{\text{adv}}$' superscripts indicate belonging to the two hypotheses, respectively. Let $g_n:\mathcal{S}^n\to \{0,1\}$ denote the controller's decision function at time $n$, where $g_n(s^n)=0$ means that $\set H ^*$ is accepted at time $n$, and $g_n(s^n)=1$ means that $\advhyp$ is accepted. Let $P^*_n:\set{S}^n \mapsto [0,1]$ and $\Padv_n:\set{S}^n \mapsto [0,1]$ be the probabilities of sequences of length $n$ under the null and alternative hypothesis, respectively. Define $\mathcal{B}_n \subseteq \mathcal{S}^n$ as the set of sequences for which $g_n(s^n)=0$, and denote its complement by $\mathcal{B}_n^c$, where $g_n(s^n)=1$. Then the two probabilities of error are defined as follows:
\begin{align*}
    \alpha_n :=& \Pr\{g_n(s^n)=1|\set H ^* \text{ true}\} = P^*_n(\mathcal{B}_n^c )~,   \\
    \beta_n :=& \Pr\{g_n(s^n)=0| \advhyp \text{ true}\} = \Padv_n(\mathcal{B}_n )~.
\end{align*}
Specifically, let $\alpha=\lim_{n\to \infty} \alpha_n$ and $\beta=\lim_{n\to\infty} \beta_n$. We assume the controller performs the optimal detection algorithm, defined as the one minimizing the total probability of error $\alpha+\beta$. An adversarial policy is said to be $\epsilon$-covert if $\alpha+\beta=1-\epsilon$ under this policy.
For any recurrent \gls{mdp} with finite state and action spaces, for any non-stationary policy there exists a stationary policy that achieves the same long-term average state-action frequencies \cite{tsitsiklis}. We show later that the covertness properties of a policy depend on its long-term state transition frequencies, which are fully determined by its state-action frequencies, which also fully determine the long-term average reward. Hence, we can conclude that for any non-stationary policy, there exists a stationary policy that achieves the same average long-term reward and covertness, and so we can limit ourselves to considering stationary policies. 
\section{Perfect Covertness over an Infinite Horizon} \label{sec:infinity}
In this section, we analyze the covertness of the adversarial policy over an infinite time horizon. In this setting, an adversarial policy can either be 0-covert—meaning it will be detected by the controller with probability 1—or 1-covert, indicating perfect covertness. We derive the necessary and sufficient conditions for a policy to be 1-covert, and subsequently formulate the problem of finding the optimal adversarial policy as a linear program. 

For any state sequence $s^n$, we define the following empirical distributions: for any $s,s'\in \mathcal{S}$,
\begin{align}  \label{eq:emp-tau}
    &\tau_{s^n}(s) = \frac{1}{n} \sum_{t=1}^{n} \mathbbm{1}\{s_t=s\},  \\  \label{eq:emp-theta}
    &\theta_{s^n}(s,s') = \frac{1}{n-1} \sum_{t=1}^{n-1} \mathbbm{1}\{s_t=s,s_{t+1}=s' \}, \\ \label{eq:emp-T}
    &T_{s^n}(s,s') = \frac{\theta_{s^n}(s,s')}{\tau_{s^n}(s)}.
\end{align} 
Suppose that the adversary adopts a stationary policy $\pi^{\text{adv}}$. Denote by $T^*$ and $\Tadv$ the transition matrices induced by polices $\pi^*$ and $\pi^{\text{adv}}$, respectively. Recall that we assume the MDP is recurrent, hence both $T^*$ and $\Tadv$ are irreducible.

\begin{theorem}\label{infinite horizon necessary sufficient condition}
For any stationary policy $\pi^{\text{adv}}\neq\pi^*$, $\pi^{\text{adv}}$ is 1-covert if $\Tadv=T^*$ and 0-covert otherwise.
\end{theorem}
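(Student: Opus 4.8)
My plan is to treat the two regimes separately: the equality case by an exact distributional identity, and the inequality case by a concentration argument for the induced Markov chains.

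\emph{When $\Tadv=T^*$.} I would first observe that under a stationary policy the visited state process is a Markov chain, so the law of any sequence factorizes as $P_\pi(s^n)=\mu(s_1)\prod_{t=1}^{n-1}T_\pi(s_t,s_{t+1})$, depending on the policy only through its transition matrix. Hence $\Tadv=T^*$ forces $P^*_n(s^n)=\Padv_n(s^n)$ for every $s^n$ and every $n$, despite $\pi^{\text{adv}}\neq\pi^*$. For any decision region $\mathcal{B}_n$ this gives $\alpha_n+\beta_n=P^*_n(\mathcal{B}_n^c)+\Padv_n(\mathcal{B}_n)=P^*_n(\mathcal{B}_n^c)+P^*_n(\mathcal{B}_n)=1$, so every detector, and hence the optimal one, yields $\alpha+\beta=1$, i.e. the policy is $1$-covert. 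No asymptotics are needed here.

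\emph{When $\Tadv\neq T^*$.} Here I would exhibit a consistent test and appeal to optimality. Recurrence makes $T^*$ and $\Tadv$ irreducible, so their stationary distributions $\tau_{\pi^*},\tau_{\pi^{\text{adv}}}$ are strictly positive. By the ergodic theorem for functionals of an irreducible finite Markov chain, under $\set{H}^*$ one has $\tau_{s^n}\to\tau_{\pi^*}$ and $\theta_{s^n}\to\theta_{\pi^*}$ almost surely; since $\tau_{\pi^*}(s)>0$ the empirical ratio satisfies $T_{s^n}\to T^*$, and symmetrically $T_{s^n}\to\Tadv$ under $\advhyp$. Writing $d=\|T^*-\Tadv\|>0$, I would take the test that accepts $\set{H}^*$ exactly when $\|T_{s^n}-T^*\|<d/2$. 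Convergence in probability gives $\alpha_n\to0$, while the triangle inequality shows that acceptance forces $\|T_{s^n}-\Tadv\|>d/2$, an event of vanishing probability under $\advhyp$, so $\beta_n\to0$. This single test drives $\alpha_n+\beta_n\to0$; since the controller minimizes $\alpha_n+\beta_n$, the optimal sum also vanishes, giving $\alpha+\beta=0$ and $0$-covertness.

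The crux is the inequality case, and specifically the concentration of the empirical transition matrix $T_{s^n}=\theta_{s^n}/\tau_{s^n}$, which is a ratio of empirical counts. The delicate ingredient is control of the denominator: one must know that each state is visited a positive fraction of the time so that the ratio is asymptotically well defined, and this is exactly what recurrence (irreducibility, hence $\tau_\pi(s)>0$) supplies. Once both empirical transition matrices concentrate at their distinct limits, separating them is immediate, and I note that no appeal to the stationary transition laws $\theta_\pi$ is required, since the entry at which $T^*$ and $\Tadv$ disagree already produces distinct limits for $T_{s^n}$. A sharper route via the method of types or a Markov-chain \gls{ldp} would additionally yield the error exponents, but is unnecessary for the coarse $0/1$ dichotomy asserted here.
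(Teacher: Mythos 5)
Your proof is correct and, for the detectable case, rests on the same mechanism as the paper's: the empirical transition matrix $T_{s^n}$ converges to the transition matrix of whichever policy is in force, so the hypotheses separate exactly when $\Tadv \neq T^*$. You make precise what the paper leaves terse --- you exhibit an explicit test (accept $\set{H}^*$ iff $\|T_{s^n}-T^*\|<d/2$), justify convergence of the ratio $\theta_{s^n}/\tau_{s^n}$ via positivity of the stationary distribution of an irreducible finite chain, and then invoke optimality of the controller's detector to conclude $\alpha+\beta\to 0$; the paper compresses all of this into ``$T_{s^n}\to\Tadv$, so the controller accepts $\advhyp$.'' Where you genuinely diverge is the covert case: the paper argues informally that if $T_{s^n}=T^*$ and several policies induce $T^*$ then ``the controller is uninformed,'' whereas you observe the stronger, non-asymptotic fact that the controller's observation law factorizes as $\mu(s_1)\prod_t T_\pi(s_t,s_{t+1})$, so $P^*_n$ and $\Padv_n$ are literally the same measure for every $n$ and $\alpha_n+\beta_n=1$ for every detector, not merely in the limit. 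Your version is cleaner and also makes explicit why $\Tadv=T^*$ with $\pi^{\text{adv}}\neq\pi^*$ is a non-vacuous case. One caveat that is the paper's fault rather than yours: the paper defines $\epsilon$-covertness by $\alpha+\beta=1-\epsilon$, under which ``1-covert'' would literally mean $\alpha+\beta=0$, and the paper's own proof assigns $\epsilon=1$ to the detectable case and $\epsilon=0$ to the covert one, contradicting the theorem's labels; your reading (1-covert means perfectly covert, $\alpha+\beta=1$) resolves this internal inconsistency in the direction the theorem statement evidently intends.
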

\begin{proof}
     As $n\to\infty$, $T_{s^n} \to \Tadv$. Thus, if $T_{s^n} \neq T^*$, it is obvious that $\pi^{\text{adv}} \neq \pi^*$, the controller accepts $\advhyp$ and $\epsilon=1$. If $T_{s^n} = T^*$, and there exist multiple stationary policies inducing the transition matrix $T^*$, the controller is uninformed and $\epsilon=0$. The case where there exists a single stationary policy inducing the transition matrix $T^*$ is degenerate as it implies that $\pi^{\text{adv}}=\pi^*$, in which case the detection is between two identical hypotheses.
\end{proof}
Theorem \ref{infinite horizon necessary sufficient condition} implies that an adversarial policy $\pi^{\text{adv}}$ is perfectly covert if and only if its transition matrix is identical to that of the original control policy. Denote by $\Pi_1$ the set of policies satisfying this condition, i.e., $\Pi_1 := \{\pi^{\text{adv}}: \Tadv = T^*  \}$. Let $T_s$ be the $|\set{A}| \times |\set{S}|$ matrix with $T_s(a,k)=T(k|s,a)$. Then $\pi\in \Pi_1$ if and only if $T_s^\top (\pi(\cdot|s) - \pi^*(\cdot|s))=0$ for all $s\in \mathcal{S}$. Hence $\Pi_1$ is convex whenever it is non-empty.

Given Theorem~\ref{infinite horizon necessary sufficient condition}, we can set an optimization problem which finds the optimal adversarial policy from the adversary's point of view, with the constraint that $\epsilon = 0$. Define the $(|\set{S}|+1)\times|\set{A}|$ matrix 
$
    C=
    \begin{bmatrix}
        T_s &
        \mathbf{1}
    \end{bmatrix}^\top
$, and $\Delta \pi = \pi - \pi^*$. Then, for each $s \in \set{S}$, the best adversarial policy that guarantees $\epsilon=0$ is given by
\begin{equation} \label{prob: infty}
    \begin{aligned}
    \min_{\Delta \pi(\cdot|s)} \;&\Delta \pi(\cdot|s)^\top r(s,\cdot) \\
    s.t.\; &C \Delta \pi(\cdot|s) = \mathbf{0}, \\
    &-\pi^*(s,a)\leq \Delta\pi(s,a)\leq1-\pi^*(s,a), \forall a \in\set{A} 
\end{aligned}
\end{equation}
We can see that the admissible values of $\Delta \pi$ belong to the null space of $C$. Thus, if this matrix has the full column rank ($\rank(C)=|\set{A}|$), the null space is trivial, containing only the origin, and the adversary cannot change the policy covertly. Meanwhile, if $\rank(C)<|\set{A}|$, the dimensionality of the space of admissible policies is $|\set{A}|-\rank(C)$.

\begin{remark}
Clearly, if we have two actions, a good one, $a_g$ and a bad one $a_b$ such that for all $s,s'$ we have $T(s'|s,a_g)$ =  $T(s'|s,a_b)$, but for some $s$ we have $r(s,a_g) > r(s,a_b)$, then the adversary can change the policy and increase the regret, without being noticed by the controller who sees the complete state sequence. So there exist MDPs for which the solution to problem (\ref{prob: infty}) is not only $\pi^*$ but also a strictly suboptimal policy. This result shows that the adversary can also achieve this in more complex ways, as in each state, it can modify its policy as long as this does not affect the transition probabilities between states. See Appendix~\ref{examples_infinity} for examples.
\end{remark}
\section{Asymptotic error exponents} \label{sec:asymptotic}
In the previous section, we established that as the length of the observed state sequence approaches infinity, the total probability of detection error, $\epsilon$, converges to 1 whenever the adversarial policy induces a transition matrix different from that of the controller's policy. However, in practical scenarios, the time horizon is finite, albeit potentially large. This raises an important question: how rapidly does $\epsilon$ increase as a function of time for an adversarial policy $\pi^{\text{adv}}$ when $\Tadv\neq T^*$? To address this, we focus on deriving the asymptotic error exponent, which characterizes the rate at which the probability of error decays over time.

We derive the asymptotic results using tools from large deviation theory. Suppose $\{ S_t \}$ is the Markov chain induced by a stationary policy $\pi$. Then, given a realization $s^n$ of the state sequence, we can construct the sequence of empirical estimates $ \{ \theta_{s^n} \} $ using \eqref{eq:emp-theta}. Let $P_\pi(\cdot)$ denote the probability distribution of $\theta_{S^n}$ under policy $\pi$.
%
\begin{definition}
    The process $ \{ \theta_{S^n} \}$ satisfies the \gls{ldp}\footnote{We define the \gls{ldp} in a stricter sense than in \cite{alma9910123510001591}.} with rate function $I$ if, for every Borel subset $\Gamma \subseteq \Delta(\set{S}^2)$, we have:
\begin{equation*}
    \lim_{n \to \infty} \frac{1}{n} \log P_\pi \{ \theta_{S^n} \in \Gamma\} = 
- \inf_{\nu \in \overline{\Gamma}} I(\nu).
\end{equation*}
where $I:\Delta(\set{S}^2) \to \mathbb{R} \cup \{ \infty \}$ is a continuous mapping, and $\Gamma$ does not have any isolated points, i.e. $\Gamma \subseteq \overline{\Gamma^0}$. 
\end{definition}

Intuitively, satisfying the \gls{ldp} means that given a large number of samples, the probability of the sequence of realizations of the process being within a certain set decreases exponentially at a rate depending on the member of the set giving the slowest rate of decrease.
Let $\set{M}$ be the set of shift-invariant measures, meaning that $\set{M}=\{\nu \in \Delta(\set{S}^2):\sum_{j \in \set{S}}\nu_{i,j}=\sum_{j \in \set{S}}\nu_{j,i}\}$. Let $\theta^1, \theta^2 \in \set{M}$ be two shift-invariant distributions over state transitions, $\tau^1, \tau^2$ be the corresponding  marginal distributions over states, and $T^1,T^2$ be the corresponding transition matrices. Then, we can define the differential divergence $D_K : \mc{M}^2 \mapsto \mathbb{R}^{+}_0 \cup \{\infty\}$ between state transition distributions:
\begin{align}
    D_K(\theta^1,\theta^2) := \sum_{s\in \mathcal{S}}\tau^1(s)\sum_{s'\in\mathcal{S}} T^1(s,s')\log \frac{T^1(s,s')}{T^2(s,s')}.
\end{align}
According to \eqref{eq:emp-tau}-\eqref{eq:emp-T}, it is easy to verify that $D_K(\theta^1,\theta^2)=H(\theta^1||\theta^2) - H(\tau^1||\tau^2)$. The following result from the large deviation theory is useful in our analysis:

\begin{theorem} \label{theorem: sanov}
\cite[Theorem~3.1.13]{alma9910123510001591} Given a Markov process $\{ S_t \}$ with irreducible transition matrix $T$ and state transition frequencies $\theta$, the $\Delta(\set{S}^2)$-valued process $\{ \theta_{S^n} \}$ satisfies the \gls{ldp} with the rate function 
\begin{equation*}
    I(\nu)
    =\begin{cases}
        D_K(\nu,\theta), & \text{if } \nu \in \set{M} \\
        \infty, & \text{if }\nu \notin \set{M}
    \end{cases}
\end{equation*}
\end{theorem}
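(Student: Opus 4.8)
\section*{Proof proposal}

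The plan is to establish this \gls{ldp} via the Gärtner--Ellis theorem applied to the empirical pair measure $\theta_{S^n}$, working directly in the finite, irreducible setting of the statement. First I would dispose of the shift-invariance constraint. Because consecutive pairs in a realization $s^n$ share a coordinate, the first and second marginals of $\theta_{s^n}$ differ only by the boundary term $\frac{1}{n-1}\bigl(\mathbbm{1}\{s_1=\cdot\}-\mathbbm{1}\{s_n=\cdot\}\bigr)$, which is $O(1/n)$. Hence every weak limit point of $\theta_{S^n}$ lies in $\set{M}$, and any Borel set bounded away from $\set{M}$ carries no asymptotic mass, forcing $I(\nu)=\infty$ for $\nu\notin\set{M}$.

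Next, for the regime $\nu\in\set{M}$, I would compute the scaled cumulant generating function. For a test function $\phi:\set{S}^2\to\mathbb{R}$, paired with a measure via $\langle\phi,\nu\rangle=\sum_{s,s'}\phi(s,s')\nu(s,s')$, the moment generating function factorizes as $\mathbb{E}\bigl[\exp(\sum_{t=1}^{n-1}\phi(S_t,S_{t+1}))\bigr]=\mu^\top (T^\phi)^{n-1}\mathbf{1}$, where $T^\phi(s,s'):=T(s,s')e^{\phi(s,s')}$ is a nonnegative irreducible matrix. Perron--Frobenius theory then yields $\Lambda(\phi):=\lim_n \frac{1}{n}\log\mathbb{E}[\cdots]=\log\rho(T^\phi)$, with $\rho$ the leading eigenvalue. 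Since irreducibility makes this eigenvalue simple, analytic perturbation theory guarantees that $\Lambda$ is finite, convex, and essentially smooth, so the Gärtner--Ellis theorem applies and delivers the \gls{ldp} with rate $I=\Lambda^*$, the convex conjugate.

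The crux is then identifying $\Lambda^*$ with $D_K(\cdot,\theta)$. The constraint $\nu\in\set{M}$ re-emerges cleanly here: for a coboundary $\phi(s,s')=f(s')-f(s)$, the matrix $T^\phi$ is a diagonal conjugation of the stochastic $T$, so $\rho(T^\phi)=1$ and $\Lambda(\phi)=0$, while $\langle\phi,\nu\rangle=\langle f,\tau_2^\nu-\tau_1^\nu\rangle$, the two marginals of $\nu$ tested against $f$; taking $f$ large shows $\Lambda^*(\nu)=\infty$ off $\set{M}$, consistent with the first step. For $\nu\in\set{M}$ I would solve the variational problem $\sup_\phi[\langle\phi,\nu\rangle-\log\rho(T^\phi)]$, whose optimizer is the tilt under which $\nu$ becomes the stationary pair measure; substituting the induced transition matrix $T^\nu(s,s')=\nu(s,s')/\tau^\nu(s)$ collapses the supremum to $\sum_s \tau^\nu(s)\sum_{s'} T^\nu(s,s')\log\frac{T^\nu(s,s')}{T(s,s')}=D_K(\nu,\theta)$.

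I expect the main obstacle to be this last variational identification, since it requires both the correct guess of the optimal tilt and an argument that no perturbation improves it. A self-contained alternative would replace the middle two steps with a method-of-types argument: count the sequences of each empirical transition type (polynomially many, via an Eulerian-path count whose exponential rate is the conditional entropy $\sum_s\tau^\nu(s)H(T^\nu(s,\cdot))$) and weight by the per-sequence probability $\prod_{s,s'}T(s,s')^{n\nu(s,s')}$, which reproduces $D_K(\nu,\theta)$ directly. Finally, I would note that the paper's stricter \gls{ldp} definition---requiring $\Gamma\subseteq\overline{\Gamma^0}$---is exactly what makes the open-set lower bound and the closed-set upper bound coincide into the single limit stated.
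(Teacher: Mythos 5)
The paper does not actually prove this statement: it imports it verbatim from \cite[Theorem~3.1.13]{alma9910123510001591}, so there is no internal proof to match yours against. Your reconstruction is sound but follows a genuinely different route from the cited source. You go through G\"artner--Ellis: tilt the kernel to $T^\phi(s,s')=T(s,s')e^{\phi(s,s')}$, get $\Lambda(\phi)=\log\rho(T^\phi)$ from Perron--Frobenius, and identify the Legendre transform with $D_K(\cdot,\theta)$ on $\set M$ (with the coboundary trick correctly forcing $I=\infty$ off $\set M$). The reference instead proves the result by the method of types --- counting sequences with a prescribed pair-empirical measure via an Eulerian-path/Whittle-type combinatorial lemma --- which is exactly the ``self-contained alternative'' you sketch in your last paragraph; that route is more elementary and gives the exact-limit (rather than open/closed sandwich) form of the \gls{ldp} almost for free on a finite alphabet, whereas your main route generalizes better beyond finite state spaces. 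Two points in your primary argument deserve more care: (i) the reverse inequality $\Lambda^*(\nu)\le D_K(\nu,\theta)$ needs a variational characterization of the Perron root (e.g.\ a Donsker--Varadhan-type formula), not just exhibiting the candidate tilt --- you flag this as the crux, and it is; and (ii) passing from the G\"artner--Ellis upper bound on $\overline\Gamma$ and lower bound on $\Gamma^0$ to the single limit demanded by the paper's stricter definition requires $\inf_{\overline\Gamma}I=\inf_{\Gamma^0}I$, which uses continuity of $D_K(\cdot,\theta)$ on its effective domain; when $T$ has zero entries the rate function is only lower semicontinuous at the boundary of the support constraint, so this step (and, in fairness, the paper's own ``continuous $I$'' hypothesis) should be stated with the support caveat made explicit. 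Neither issue invalidates the plan; both are fixable with standard arguments.
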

Note that this result is a variation of the classic result given by Sanov's theorem, which is defined for i.i.d. processes instead and has a rate $I(\nu)=H(\nu \| \phi)$, where $\phi$ is the distribution of the i.i.d. variables. 
\subsection{Predefined adversarial policy} \label{sec:fixed hypothesis}
We start from a simple setting where the adversary either follows the control policy faithfully or deviates by executing a known adversarial policy $\pi^{\text{adv}}$, which is known to the controller as well. In this setting, the controller is faced with a binary hypothesis testing problem. The null hypothesis, $\mathcal{H}^*$, represents the scenario where the actuator adheres to the intended control policy, while the adversarial hypothesis, $\advhyp$, corresponds to the scenario where the actuator executes $\pi^{\text{adv}}$. Both policies are assumed to be stationary and known to both the controller and the adversarial actuator. 

Define the normalized log-likelihood ratio of $s^n$ as
\begin{equation}\label{eq:log-likelihood}
    \begin{aligned} 
    L(s^n)&=\frac{1}{n-1}\log \frac{P^*_n(s^n)}{\Padv_n(s^n)}=D_K(\theta_{s^n}, \thetaadv) - D_K(\theta_{s^n}, \theta^*),
\end{aligned}
\end{equation}
where the second equality is shown in Appendix~\ref{app:log-likelihood}. We set our detector $g_n$ as the normalized log-likelihood ratio test, which is equivalent to the likelihood ratio test used in binary hypothesis testing. It accepts the null hypothesis ($g_n(s^n)=0$) when $L(s^n)>\eta$, where $\eta$ is a threshold, and it accepts the alternative ($g_n(s^n)=1$) otherwise.
The normalized log-likelihood ratio test defines a set $\set{B}_n \subseteq \set{S}^n$, under which the null hypothesis is accepted, and its complement $\set{B}_n^c$, under which the alternative hypothesis is accepted. We can see from (\ref{eq:log-likelihood}) that the normalized log-likelihood ratio depends on the trajectory $s^n$ only through the empirical distribution over state transitions $\theta_{s^n}$. Thus, the test equivalently defines a set $\hat{\set{B}}_n \subseteq \Delta(\set{S}^2)$, in which $\set{H}^*$ is accepted, and its complement $\hat{\set{B}}_n^c$, in which $\advhyp$ is accepted, depending on the location of $\theta_{s^n}$. 

The Neyman-Pearson lemma states that for a sequence of random variables observed, the optimal detector uses a threshold on the likelihood ratio. The null hypothesis is rejected when the likelihood ratio is lower than a threshold, and it is accepted otherwise. This is optimal in the sense that it is not possible to decrease the probability of type I error (i.e., $\alpha_n$) without increasing the probability of type II error (i.e., $\beta_n$). The threshold can be adjusted higher or lower based on the type of error we prioritize reducing.
\subsubsection{Error exponents for a fixed threshold} 
In this part, we analyze the asymptotic decay of $\alpha_n$ and $\beta_n$ when setting a fixed threshold for all $n$.
For a fixed threshold $\eta$, the sets $\hat{\set{B}}_n$ are the same for any $n\geq 2$, i.e., $\hat{\set{B}}_n=\hat{\set{B}}$. Then, we can define the two probabilities of error for all $n\geq 2$ as 
\begin{equation*}
    \alpha_n= \Pr(\theta_{s^n} \in \hat{\set{B}}^c | \set{H}^*), \;\;\;
    \beta_n= \Pr(\theta_{s^n} \in \hat{\set{B}} | \advhyp).
\end{equation*}
Assuming irreducibility of the hypotheses' transition matrices, which is guaranteed by the recurrent MDP assumption, applying Theorem~\ref{theorem: sanov} to these equations, noting that $I$ in this form is continuous and that both $\hat{\set{B}}$ and $\hat{\set{B}}^c$ do not contain isolated points (see Appendix~\ref{app: no isolated points} for the proof), we obtain the following. 
\begin{theorem}
    Under the normalized log-likelihood ratio detector with a fixed threshold $-D_K(\thetaadv, \theta^*) \leq \eta \leq D_K(\theta^*, \thetaadv)$, the asymptotic error exponents are given by
    \begin{align*}
    \lim_{n \to \infty} \frac{1}{n} \log \alpha_n&= -\inf_{\theta \in \overline{\hat{\set{B}}^c} \cap \mc{M}} D_K(\theta, \theta^*), \\
    \lim_{n \to \infty} \frac{1}{n} \log \beta_n&= -\inf_{\theta \in \overline{\hat{\set{B}}} \cap \mc{M}} D_K(\theta, \thetaadv) .
\end{align*}
where the infimum of the empty set is taken to be $\infty$. 
%

\end{theorem}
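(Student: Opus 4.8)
The plan is to read off both exponents as direct instances of the Markov--Sanov large deviation principle of Theorem~\ref{theorem: sanov}, applied separately under each hypothesis to the (fixed, $n$-independent) regions $\hat{\set{B}}$ and $\hat{\set{B}}^c$. The only structural fact needed is the one already recorded above: the detector depends on the trajectory solely through the empirical transition measure $\theta_{s^n}$, so that $\alpha_n=\Pr(\theta_{S^n}\in\hat{\set{B}}^c\mid\set{H}^*)$ and $\beta_n=\Pr(\theta_{S^n}\in\hat{\set{B}}\mid\advhyp)$ are genuinely probabilities of $\theta_{S^n}$ falling in a Borel set, which is exactly the object the LDP controls.

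First I would instantiate the two rate functions. Under $\set{H}^*$ the induced chain is irreducible with transition frequencies $\theta^*$, so Theorem~\ref{theorem: sanov} gives the LDP with rate $I^*(\nu)=D_K(\nu,\theta^*)$ for $\nu\in\set{M}$ and $I^*(\nu)=\infty$ otherwise; symmetrically, under $\advhyp$ the rate is $I^{\text{adv}}(\nu)=D_K(\nu,\thetaadv)$ on $\set{M}$ and $\infty$ off $\set{M}$. Taking $\Gamma=\hat{\set{B}}^c$ under $\set{H}^*$ yields $\lim_n\frac1n\log\alpha_n=-\inf_{\nu\in\overline{\hat{\set{B}}^c}}I^*(\nu)$, and $\Gamma=\hat{\set{B}}$ under $\advhyp$ yields $\lim_n\frac1n\log\beta_n=-\inf_{\nu\in\overline{\hat{\set{B}}}}I^{\text{adv}}(\nu)$. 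Since each rate is $+\infty$ off $\set{M}$, every infimum over a closure collapses to the infimum over its intersection with $\set{M}$, producing precisely the two claimed expressions, with the convention that the infimum over an empty set is $\infty$.

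To apply the LDP in the strict sense of the definition I must discharge its hypotheses: irreducibility of $T^*$ and $\Tadv$ (inherited from the recurrent-MDP assumption), continuity of $I$ (immediate from continuity of $D_K=H(\theta\|\cdot)-H(\tau\|\cdot)$ in its first argument), and---the only real condition---that neither $\hat{\set{B}}$ nor $\hat{\set{B}}^c$ has isolated points, i.e.\ that each $\Gamma$ satisfies $\Gamma\subseteq\overline{\Gamma^0}$. This is where the threshold range enters. Evaluating the statistic at the reference measures gives $L(\theta^*)=D_K(\theta^*,\thetaadv)$ and $L(\thetaadv)=-D_K(\thetaadv,\theta^*)$, so the prescribed interval $-D_K(\thetaadv,\theta^*)\le\eta\le D_K(\theta^*,\thetaadv)$ is exactly what keeps $\theta^*$ inside $\hat{\set{B}}$ and $\thetaadv$ inside $\hat{\set{B}}^c$; this makes both regions nonempty and supplies each with an interior anchor, which is what rules out the degenerate situation in which a region collapses to a boundary set of isolated points.

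I expect the no-isolated-points verification to be the only nontrivial step, and I would streamline it by exploiting a convenient cancellation: because the $\theta\log\theta$ and $\tau\log\tau$ terms drop out of the difference, $L(\theta)=D_K(\theta,\thetaadv)-D_K(\theta,\theta^*)$ is in fact \emph{affine} in $\theta$, equal to $\sum_{s,s'}\theta(s,s')\log\frac{\theta^*(s,s')}{\thetaadv(s,s')}+\sum_{s}\tau(s)\log\frac{\tau^{\text{adv}}(s)}{\tau^*(s)}$. Hence $\hat{\set{B}}=\{L>\eta\}$ and $\hat{\set{B}}^c=\{L\le\eta\}$ are half-spaces cut from the simplex, and their intersections with the linear subspace $\set{M}$ are convex. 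A convex set with nonempty relative interior satisfies $\Gamma\subseteq\overline{\Gamma^0}$ automatically, so the anchors $\theta^*\in\hat{\set{B}}$ and $\thetaadv\in\hat{\set{B}}^c$ furnished by the threshold condition close the argument, with the full details deferred to Appendix~\ref{app: no isolated points}.
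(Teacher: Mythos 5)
Your proposal is correct and follows essentially the same route as the paper: both exponents are read off by applying the Markov--Sanov LDP (Theorem~2) to the fixed decision regions under each hypothesis, with the infimum collapsing onto $\set{M}$ because the rate is infinite off it, and with the no-isolated-points condition discharged exactly as in the paper's appendix, via the affineness of $L$ in $\theta_{s^n}$ (the paper phrases it as closure under convex combinations, you as half-spaces of the simplex). Your added remark that the threshold range $-D_K(\thetaadv,\theta^*)\le\eta\le D_K(\theta^*,\thetaadv)$ is what anchors $\theta^*$ and $\thetaadv$ in the two regions matches the paper's informal discussion following the theorem.
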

Note that the threshold must take those values as to have a boundary between $\hat{\set{B}}_n$ and $\hat{\set{B}}_n^c$ that is located between the two hypotheses in the space of state transition frequencies. Informally speaking, $D_K$ can be considered a divergence between different distributions over state transitions. In general, the divergences in the above statement grow if the divergence between $\theta^*$ and $\thetaadv$ is greater, leading to higher error exponents due to the processes becoming increasingly different, and thus, easy to distinguish. Additionally, note that when the exponents are $0$, the error probabilities do not decay asymptotically. This occurs if $\theta^*=\thetaadv$, meaning that the two Markov chains are indistinguishable, which is the setting described in Section~\ref{sec:infinity}.
\subsubsection{Error exponent for a fixed $\alpha_n$ (error type I)} In the following, we derive the asymptotic error exponent of $\beta_n$ (type II error) when instead of fixing the threshold of the detector, we fix an upper bound on the type I error probability and adjust the threshold accordingly. We now provide a version of the Chernoff-Stein lemma, adapted to the Markov chain scenario.
\begin{theorem} \label{theorem:chernoff}
    Consider the binary hypothesis test when $s^n$ is a Markov chain drawn according to either of the two state transition distributions $\theta^*$ and $\thetaadv$, respectively, where $D_K(\theta^*,\thetaadv)<\infty$. Let $\set{A}_n \subseteq \set{S}^n$ be an acceptance region for the null hypothesis $\mc{H}^0$. Let the probabilities of error be
    \[
    \alpha_n=P^*_n(\set{A}^c_n), \;\;\;\;\;\; \beta_n=\Padv_n(\set{A}_n)
    \]
    and for $0< \delta < \frac{1}{2}$, define
    \[
    \beta_n^{\delta}=\min_{\substack{\set{A}_n \subseteq \set{S}^n \\ \alpha_n < \delta}} \beta_n.
    \]
    Then
    \begin{equation} \label{eq: thrm3 main result}
    \lim_{n\to \infty} \frac{1}{n} \log \beta_n^\delta = - D_K(\theta^*,\thetaadv).
    \end{equation}
\end{theorem}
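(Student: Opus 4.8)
The plan is to follow the two-part structure of the classical Chernoff--Stein lemma---an achievability argument exhibiting a test that meets the exponent, and a converse showing that no test can beat it---with the only substantive change being the replacement of the i.i.d.\ asymptotic equipartition property by its Markov analogue. The engine of both parts is the observation, already recorded in \eqref{eq:log-likelihood}, that the normalized log-likelihood ratio is a continuous functional of the empirical transition distribution, $L(s^n)=D_K(\theta_{s^n},\thetaadv)-D_K(\theta_{s^n},\theta^*)$, together with a concentration statement for $L(s^n)$ under $\set{H}^*$.

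First I would establish the Markov AEP. Writing $P^*_n(s^n)=\mu(s_1)\prod_{t=1}^{n-1}T^*(s_t,s_{t+1})$ and likewise for $\Padv_n$, the ratio telescopes into $\sum_{s,s'}\theta_{s^n}(s,s')\log\frac{T^*(s,s')}{\Tadv(s,s')}$ up to a single boundary term of order $1/n$ coming from the initial state, which does not affect the limit. Under $\set{H}^*$ the empirical transition distribution $\theta_{s^n}$ converges (in probability, indeed almost surely) to $\theta^*$ by the ergodic theorem for the irreducible finite chain $T^*$, whose irreducibility is guaranteed by the recurrence assumption. Since $D_K(\theta^*,\thetaadv)<\infty$ rules out any transition charged by $\theta^*$ but forbidden by $\thetaadv$, the map $\theta\mapsto D_K(\theta,\thetaadv)-D_K(\theta,\theta^*)$ is continuous in a neighbourhood of $\theta^*$, so
\[
L(s^n)\to D_K(\theta^*,\thetaadv)-D_K(\theta^*,\theta^*)=D_K(\theta^*,\thetaadv)
\]
in probability under $\set{H}^*$. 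I would then fix $\epsilon>0$ and define the relative-entropy-typical set $\set{C}_n=\{s^n: |L(s^n)-D_K(\theta^*,\thetaadv)|\le \epsilon\}$, so that $P^*_n(\set{C}_n)\to 1$.

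For achievability, take the acceptance region $\set{A}_n=\set{C}_n$. Then $\alpha_n=P^*_n(\set{C}_n^c)\to 0<\delta$ for all large $n$, so $\set{C}_n$ is admissible in the definition of $\beta_n^\delta$. On $\set{C}_n$ the defining inequality gives $\Padv_n(s^n)\le 2^{-(n-1)(D_K(\theta^*,\thetaadv)-\epsilon)}P^*_n(s^n)$; summing over $s^n\in\set{C}_n$ yields $\beta_n\le 2^{-(n-1)(D_K(\theta^*,\thetaadv)-\epsilon)}$, whence $\limsup_n \tfrac1n\log\beta_n^\delta\le -(D_K(\theta^*,\thetaadv)-\epsilon)$. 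Letting $\epsilon\downarrow 0$ gives the upper bound.

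For the converse, let $\set{A}_n$ be any region with $\alpha_n<\delta$ and intersect it with $\set{C}_n$. On $\set{C}_n$ we have $\Padv_n(s^n)\ge 2^{-(n-1)(D_K(\theta^*,\thetaadv)+\epsilon)}P^*_n(s^n)$, so
\[
\beta_n\ge \Padv_n(\set{A}_n\cap\set{C}_n)\ge 2^{-(n-1)(D_K(\theta^*,\thetaadv)+\epsilon)}\,P^*_n(\set{A}_n\cap\set{C}_n).
\]
A union bound gives $P^*_n(\set{A}_n\cap\set{C}_n)\ge 1-\alpha_n-P^*_n(\set{C}_n^c)\ge 1-\delta-o(1)$, which stays bounded away from $0$ precisely because $\delta<\tfrac12$; taking logarithms, sending $n\to\infty$ and then $\epsilon\downarrow 0$ yields $\liminf_n\tfrac1n\log\beta_n\ge -D_K(\theta^*,\thetaadv)$ uniformly over admissible regions, in particular for the minimizer. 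Combining the two bounds proves \eqref{eq: thrm3 main result}. The main obstacle is the concentration step: unlike the i.i.d.\ case, $L(s^n)$ is a sum of dependent increments, so I must invoke the ergodic theorem for irreducible finite Markov chains (rather than the plain law of large numbers) to pin $\theta_{s^n}$ at $\theta^*$, and must use the finiteness hypothesis $D_K(\theta^*,\thetaadv)<\infty$ to guarantee continuity of $D_K$ there and thus the validity of both $2^{\pm(n-1)(\cdots)}$ sandwich bounds.
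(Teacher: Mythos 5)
Your proposal is correct and follows essentially the same route as the paper: both adapt the Cover--Thomas typical-set proof of the Chernoff--Stein lemma to the Markov setting, using the fact that the normalized log-likelihood ratio is a functional of the empirical transition distribution, concentrating it at $D_K(\theta^*,\thetaadv)$ under $\set{H}^*$, proving achievability with the $D_K$-typical set as acceptance region, and proving the converse by intersecting an arbitrary admissible region with that typical set and applying the same sandwich bound. Your explicit handling of the $O(1/n)$ boundary term from the initial state and your remark on where $\delta<\tfrac12$ is used are small points of added care, but they do not change the argument.
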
 

\begin{proof}
    The detailed proof is provided in Appendix~\ref{app: proof 4 properties}. 
\end{proof}
Again, it can be seen that the rate of decay of the error probabilities is improved as the divergence between $\theta^*$ and $\thetaadv$ increases, with an error exponent of zero when $\theta^*=\thetaadv$.
\subsection{Unknown adversarial policy}

In Section~\ref{sec:fixed hypothesis}, we analyzed the performance of a detector with the assumption that the alternative hypothesis $\advhyp$ is known. In this section, we study the setting where the controller has no knowledge of the adversarial policy or its possible distribution, and thus conducts the detection in an `anomaly detection' fashion: for any state sequence, the detector calculates a statistic of the sequence and compares it with a fixed threshold. If the statistic is above the threshold, the detector does not reject the null hypothesis, otherwise it is rejected. The threshold is chosen depending on the accepted level of Type I error, i.e., the probability of wrongly rejecting the null hypothesis. To decide on the best statistic to use, we define optimality as initially proposed by Hoeffding \cite{alma9910123510001591}: 
\begin{definition}
    A test $\set{S}$ is optimal (for a given threshold $\eta > 0$) if, among all tests that satisfy 
    \begin{equation*} \label{eq: alpha constraint}
    \limsup_{n\to\infty} \frac{1}{n} \log \alpha_n \leq -\eta,
    \end{equation*}
    test $\set{S}$ has the maximal exponential rate of error, i.e., uniformly over all possible laws $\mu_1$, $-\limsup_{n \to \infty} \frac{1}{n} \log \beta_n$ is maximal.
\end{definition}
Using this definition, we can restrict our focus on tests that rely on the empirical measure of state transitions. This follows from \cite[Lemma~3.5.3]{alma9910123510001591}, whose proof can be readily adapted to accommodate the case of state transition distributions in Markov chains. We use the following theorem to choose the statistic to be used for detection.
\begin{theorem} \label{theorem: optimal test}
    Let test $\set{F}^*$ consist of the maps
    \[
    \set{F}^*(s^n)=
    \begin{cases}
        0 , \text{ if } D_K(\theta, \theta^*)<\eta, \\
        1 , \text{ otherwise.}
    \end{cases}
    \]
    Then $\set{F}^*$ is an optimal test for $\eta$ and its error exponents are given by 
    \begin{align*}
        \lim_{n \to \infty} \frac{1}{n} \log \alpha_n &= -\eta \\
        \lim_{n \to \infty} \frac{1}{n} \log \beta_n &= - \inf_{ \{ \nu : D_K(\nu, \theta^*) < \eta \} } D_K(\nu, \thetaadv) .
    \end{align*}    
\end{theorem}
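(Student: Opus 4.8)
The plan is to treat this as a Markov-chain analogue of Hoeffding's universal hypothesis test and to split the argument into two parts: first computing the two error exponents of $\set{F}^*$ directly from the large-deviation principle, and then establishing its Hoeffding optimality.

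For the exponents, I would note that $\set{F}^*$ depends on the trajectory only through $\theta_{s^n}$, so in $\Delta(\set{S}^2)$ it is characterized by the acceptance region $\hat{\set{B}}=\{\nu: D_K(\nu,\theta^*)<\eta\}$ and its complement $\hat{\set{B}}^c=\{\nu:D_K(\nu,\theta^*)\geq\eta\}$. Applying Theorem~\ref{theorem: sanov} under $\set{H}^*$ (rate function $D_K(\cdot,\theta^*)$) to $\hat{\set{B}}^c$ and under $\advhyp$ (rate function $D_K(\cdot,\thetaadv)$) to $\hat{\set{B}}$ — both sets being free of isolated points by Appendix~\ref{app: no isolated points}, and $D_K$ being continuous — yields
\[
\lim_n \tfrac1n\log\alpha_n=-\inf_{\nu\in\overline{\hat{\set{B}}^c}\cap\set{M}}D_K(\nu,\theta^*),\qquad \lim_n\tfrac1n\log\beta_n=-\inf_{\nu\in\overline{\hat{\set{B}}}\cap\set{M}}D_K(\nu,\thetaadv).
\]
I would then evaluate the two infima using only continuity of $\nu\mapsto D_K(\nu,\theta^*)$: since $\hat{\set{B}}^c$ is already closed and is the superlevel set $\{D_K(\cdot,\theta^*)\geq\eta\}$, the first infimum equals $\eta$ (attained on the level set $\{D_K(\cdot,\theta^*)=\eta\}$, which is nonempty because $D_K(\theta^*,\theta^*)=0<\eta$); and because $\overline{\hat{\set{B}}}=\{D_K(\cdot,\theta^*)\leq\eta\}$ is the closure of the open sublevel set while $D_K(\cdot,\thetaadv)$ is continuous, the second infimum coincides with $\inf_{\{\nu:D_K(\nu,\theta^*)<\eta\}}D_K(\nu,\thetaadv)$, giving the stated formulas (the constraint $\nu\in\set{M}$ being implicit since $D_K$ is defined only on $\set{M}^2$).

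For optimality I would invoke the reduction of \cite[Lemma~3.5.3]{alma9910123510001591}, adapted to state-transition measures, which lets us consider only competing tests that are functions of $\theta_{s^n}$, i.e. regions $\hat{A}_n\subseteq\Delta(\set{S}^2)$ with $\alpha_n=P^*_n(\hat{A}_n^c)$ satisfying $\limsup_n\tfrac1n\log\alpha_n\leq-\eta$. The core claim is that this constraint forces the acceptance region to asymptotically contain a neighborhood of every $\nu\in\set{M}$ with $D_K(\nu,\theta^*)<\eta$: if some open ball around such a $\nu$ met $\hat{A}_n^c$ along a subsequence, the large-deviation lower bound applied to that ball would give $\liminf_n\tfrac1n\log\alpha_n\geq-D_K(\nu,\theta^*)>-\eta$, contradicting the constraint. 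Consequently, for every $\epsilon>0$ and all large $n$, $\hat{A}_n\supseteq\{\nu:D_K(\nu,\theta^*)\leq\eta-\epsilon\}$, whence the large-deviation lower bound for $\beta_n=\Padv_n(\hat{A}_n)$ under $\advhyp$ gives $-\limsup_n\tfrac1n\log\beta_n\leq\inf_{\{D_K(\nu,\theta^*)\leq\eta-\epsilon\}}D_K(\nu,\thetaadv)$. Letting $\epsilon\downarrow0$ and using continuity of $D_K(\cdot,\thetaadv)$, the right-hand side converges to $\inf_{\{D_K(\nu,\theta^*)<\eta\}}D_K(\nu,\thetaadv)$, the exponent achieved by $\set{F}^*$; since the bound holds for every alternative law $\thetaadv$, this is exactly Hoeffding optimality.

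The main obstacle I expect is the optimality half, specifically making the asymptotic-containment argument rigorous: I must carefully apply the large-deviation lower bound on balls (which requires the no-isolated-points/interior-density condition) and manage the interplay of $\liminf$/$\limsup$, open versus closed regions, and the $\epsilon\downarrow0$ limit so that the chain of inequalities is valid uniformly over all alternatives $\thetaadv$. Verifying that the relevant level and sublevel sets of $D_K(\cdot,\theta^*)$ are nonempty and contain no isolated points, so that Theorem~\ref{theorem: sanov} and its lower bound genuinely apply to them, is the delicate technical point underlying both halves of the proof.
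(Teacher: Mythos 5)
Your proposal is correct and follows essentially the same route as the paper's proof: compute both exponents of $\set{F}^*$ via the Markov-chain Sanov theorem applied to the sublevel/superlevel sets of $D_K(\cdot,\theta^*)$, then establish Hoeffding optimality by reducing to empirical-measure-based tests and showing that the type-I constraint forces any competing acceptance region to asymptotically contain $\{\nu: D_K(\nu,\theta^*)\leq\eta-\delta\}\cap\set{M}$, followed by the lower bound on $\beta_n$ and the limit $\delta\downarrow 0$. Your version is in fact slightly more careful at one point — you apply the large-deviation lower bound to open balls around a candidate $\nu$, whereas the paper applies it to the singleton $\Gamma=\{\nu\}$, which does not satisfy the $\Gamma\subseteq\overline{\Gamma^0}$ condition of the paper's own LDP definition — but this is a technical refinement of the same argument rather than a different approach.
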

\begin{proof}
    The proof is provided in Appendix~\ref{app: proof 4 properties}. 
\end{proof}
In this setting, the probability of type I  error is a design choice made by the controller and the adversary cannot affect it. The adversary chooses the adversarial policy in order to maximize the regret while also maximizing the probability of type II error. 
Similarly to the previous settings, the error exponents tend to increase as the $D_{K}$ divergence between $\theta^*$ and $\thetaadv$ increases. In this case, the relevant divergences are the divergence of $\theta^*$ to the decision boundary, which determines the exponent of type I error, and the smallest divergence of $\thetaadv$ to this boundary, which determines the exponent of type II error. Note that it is possible that both $\theta^*$ and $\thetaadv$ are located on the same side of the decision boundary.

Following Theorem~\ref{theorem: optimal test}, assuming that both entities know the maximum acceptable error exponent corresponding to $\alpha_n$, the actuator chooses an acceptable level of the error exponent corresponding to $\beta_n$, calls it $\eta_\beta$, and solves the following optimization problem:
\begin{equation} \label{prob: combined}
\begin{aligned}
    \min_{\pi \in \Pi} \;& \Delta \rho_\pi^\top r \\
    s.t.\; &\inf_{ \{ \nu : D_K(\nu, \theta^*) < \eta \} } D_K(\nu, \thetaadv) <  \eta_\beta~,
\end{aligned}
\end{equation}
where $\rho_\pi$ is the $|\set{S}||\set{A}|$-dimensional vector of space-action frequencies induced by policy $\pi$, $\Pi$ is the set of stationary policies, and $r$ is the $|\set{S}||\set{A}|$-dimensional vector of rewards for all space-action combinations.
Solution to (\ref{prob: infty}) gives the optimal actuator policy given that the detection occurs after observing an infinitely long sequence of states, whereas (\ref{prob: combined}) provides the optimal actuator policy given that the covertness constraints are given in terms of asymptotic exponents of the two types of error. 

\subsection{Discussion}

The following connects the results of the two main sections. The actuator aims to maximize the regret $R(\pi)=J(\pi^*)-J(\pi)$. Let $\pi'$ be the solution of (\ref{prob: infty}). Then, we can write the regret as $R(\pi)=[J(\pi^*)-J(\pi')]+[J(\pi')-J(\pi)]$, where $[J(\pi^*)-J(\pi')]$ represents the regret that can be obtained in a totally covert manner, regardless of the length of the state sequence observed, as this change in policy does not modify the statistics of the induced Markov chain, while $[J(\pi')-J(\pi)]$ is obtained by relaxing the covertness constraint, allowing some information to be `leaked' to the controller, as this change in policy changes the statistics of the induced Markov chain; and thus, this change must be $0$ if detection occurs after observing an infinitely long sequence of states.

Our results show that to achieve a certain degree of covertness, the adversarial cannot use a policy that is `too far' from $\pi^*$ in terms of $D_K$. This means that while the controller may be tempted to simply set $\pi^*$ as the optimal policy for the system, it is also important to consider the neighbourhood of $\pi^*$, as there may be another policy which has lower average rewards, but whose neighbours have better average reward than those of $\pi^*$, meaning that the potential covert policies chosen by the actuator could be less damaging to the system.

\section{Conclusion}

We showed that, given an infinite number of observed states, it is still possible for the adversarial to impact the average reward in certain MDPs without being detected, and the best adversarial policy can be found by solving a linear program. Assuming the number of observed samples is large but finite, the adversary can increase the regret, as the covertness constraint is less tight. We considered error exponents in this setting for adversarial policies known and unknown\textit{ a priori} and showed that they take the form of the divergence $D_K$ between the relevant distributions over state transitions. We plan to expand this work by studying the finite time scenario and the setting in which the adversary does not know the dynamics of the MDP but learns in an online fashion by observing the rewards over time.

\newpage

\bibliographystyle{IEEE}
\bibliography{references}

\newpage

\onecolumn

\appendix

\subsection{Examples of detection at infinity} \label{examples_infinity}

\paragraph{Example 1 - the trivial case} Let $T_s=\begin{bmatrix}
    0.8 & 0.2 \\ 0.5 & 0.5 \\ 0.8 & 0.2
\end{bmatrix}$ and the policy and the reward function in state $s$ be respectively $\pi(\cdot|s)=\begin{bmatrix}
    1 \\ 0 \\ 0
\end{bmatrix}$ and $\pi(\cdot|s)=\begin{bmatrix}
    5 \\ 3 \\ 4
\end{bmatrix}$. As the first and the third row of $T_s$ are equal, it is possible to change the policy in this state from deterministically taking the first action to taking the third, without changing the transition matrix of the induced Markov chain. As the reward of the third action is lower and the transition matrix is unchanged, the adversarial covertly affects the total reward of the system.

\paragraph{Example 2 - covert adversarial behaviour not possible} Let $T_s=\begin{bmatrix}
    0.8 & 0.2 \\ 0.2 & 0.8
\end{bmatrix}$ and the policy in state $s$ be $\pi(\cdot|s)=\begin{bmatrix}
    0.5 \\ 0.5
\end{bmatrix}$. Then we get the constraint $\begin{bmatrix}
    0.8 & 0.2  \\ 0.2 & 0.8 \\ 1 & 1
\end{bmatrix}\begin{bmatrix}
    \Delta \pi(a_0|s) \\ \Delta \pi(a_1|s)
\end{bmatrix} = \begin{bmatrix}
    0 \\ 0 \\ 0
\end{bmatrix}$. It is easy to verify that the rank of $C$ is $2$ and thus the only solution is the trivial $\Delta \pi(\cdot|s)=\begin{bmatrix}
    0 \\ 0
\end{bmatrix}$.

\paragraph{Example 3 - covert adversarial behaviour possible} Let $T_s=\begin{bmatrix}
    0.8 & 0.5\overline{6} & 0.47\overline{3} \\ 
    0.1 & 0.21\overline{6} & 0.26\overline{3} \\ 
    0.1 & 0.21\overline{6} & 0.26\overline{3}
\end{bmatrix}$ and the policy in state $s$ be $\pi(\cdot|s)=\begin{bmatrix}
    0.\overline{3} \\ 0.\overline{3} \\ 0.\overline{3}
\end{bmatrix}$. Then we get the constraint $\begin{bmatrix}
    0.8 & 0.1 & 0.1 \\ 
    0.5\overline{6} & 0.21\overline{6} & 0.21\overline{6} \\ 
    0.47\overline{3} & 0.26\overline{3} & 0.26\overline{3} \\
    1 & 1 & 1
\end{bmatrix}\begin{bmatrix}
    \Delta \pi(a_0|s) \\ \Delta \pi(a_1|s) \\ \Delta \pi(a_2|s)
\end{bmatrix}= \begin{bmatrix}
    0\\0\\0
\end{bmatrix}$. It easy to show that the matrix $C$ in this case has rank$=2$ and that the possible solution are in the form $\Delta\pi(\cdot|s)=\begin{bmatrix}
    0 \\ \alpha \\ -\alpha
\end{bmatrix}$. This gives the adversarial to use any policy $\pi(\cdot|s)=\begin{bmatrix}
    0.\overline{3} \\ 0.\overline{3}+\alpha \\ 0.\overline{3}-\alpha
\end{bmatrix}$. As we are guaranteeing that the distribution of states in the next step is unaffected, it is sufficient to minimize the immediate expected reward, so given $r(s,\cdot)=\begin{bmatrix}
    r_0 \\ r_1 \\ r_2
\end{bmatrix}$, the optimal adversarial policy in state $s$ is $\begin{bmatrix}
    0.\overline{3} \\ 0.\overline{6} \\ 0
\end{bmatrix}$ if $r_1 < r_2$, $\begin{bmatrix}
    0.\overline{3} \\ 0 \\ 0.\overline{6}
\end{bmatrix}$ if $r_1 > r_2$ and any policy on the line satisfying the bounds $0 \leq \pi(\cdot|s)\leq 1$ if $r_1 = r_2$.
\subsection{Derivation of the normalized log-likelihood ratio} \label{app:log-likelihood}

\begin{align*}
    L(s_1,s_2,\dots,s_n)&=\frac{1}{n-1} \log{\frac{P^*_n(s_1,s_2,\dots,s_n )}{\Padv_n(s_1,s_2,\dots,s_n) }} \\
    &=\frac{1}{n-1} \sum_{i=1}^{n-1}\log{\frac{T^*(s_i,s_{i+1})}{\Tadv(s_i,s_{i+1})}} \\
    &=\frac{1}{n-1} \sum_{a_1,a_2 \in \set{S}^2} (n-1)\theta_{s^n}(a_1,a_2)\log{\frac{T^*(a_1,a_2)}{\Tadv(a_1,a_2)}} \\
    &=\sum_{a_1, a_2 \in \set{S}^2} \theta_{s^n}(a_1,a_2) \log{\frac{T^*(a_1,a_2)}{\Tadv(a_1,a_2)}\frac{T_{s^n}(a_1,a_2)}{T_{s^n}(a_1,a_2)}}\\
    &=D_K(\theta_{s^n}, \thetaadv) - D_K(\theta_{s^n}, \theta^*)
\end{align*}

\subsection{Non-existence of isolated points in log-likelihood ratio decision sets} \label{app: no isolated points}

Let $\theta_{x^n},\theta_{y^n} \in \Delta(\set{S}^2)$ be two empirical doublet distributions induced by two different sequences of states $x^n,y^n$. According to Appendix~\ref{app:log-likelihood} we have 

\begin{align*}
    &L(x^n) = \sum_{a_1,a_2 \in \set{S}^2} \theta_{x^n}(a_1,a_2)\log{\frac{T^*(a_1,a_2)}{\Tadv(a_1,a_2)}}, \\&L(y^n)=\sum_{a_1,a_2 \in \set{S}^2} \theta_{y^n}(a_1,a_2)\log{\frac{T^*(a_1,a_2)}{\Tadv(a_1,a_2)}}.
\end{align*}

Let $\theta_{\lambda}=\lambda\theta_{x^n}+(1-\lambda)\theta_{y^n}$, where $\lambda \in [0,1]$. If there exists a sequence of states $s^n_\lambda$ such that $\theta(s^n_\lambda)=\theta_{\lambda}$, then

\begin{align*}
    L(s^n_\lambda)
    &=\sum_{a_1,a_2 \in \set{S}^2}(\lambda\theta_{x^n}(a_1,a_2)+(1-\lambda)\theta_{y^n}(a_1,a_2))\log{\frac{\Tadv(a_1,a_2)}{T_2(a_1,a_2)}}\\
    &=\lambda L(x^n) + (1-\lambda)L(y^n)
\end{align*}

Thus, given a constant $c$, if $L(x^n)>c$ and $L(y^n)>c$, $L(s^n_\lambda)>c$ and conversely, if $L(x^n)<c$ and $L(y^n)<c$, $L(s^n_\lambda)<c$, meaning that the log-likelihood ratio decision sets $B$ and $B^c$ have no isolated points.

\subsection{Proofs of Section~\ref{sec:fixed hypothesis}} \label{app: proof 4 properties}

\begin{lemma} \label{convergence likelihood ratio}
    Let $s^n$ be drawn under $\set{H}^*$. Then
    \begin{align*}
    \frac{1}{n-1}\log{\frac{P^*_n(s^n)}{\Padv_n(s^n)}}
    &= D_K(\theta_{s^n}, \thetaadv) - D_K(\theta_{s^n}, \theta^*) \\
    & \to D_K(\theta^*,\thetaadv) 
    \end{align*}
in probability, where the first equality is due to (\ref{eq:log-likelihood}).
\end{lemma}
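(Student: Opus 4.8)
The plan is to treat the two displayed assertions separately. The first equality is precisely \eqref{eq:log-likelihood}, already established in Appendix~\ref{app:log-likelihood}, so the only new content is the convergence $D_K(\theta_{s^n},\thetaadv) - D_K(\theta_{s^n},\theta^*) \to D_K(\theta^*,\thetaadv)$ in probability. The natural route is to show that the empirical transition distribution $\theta_{s^n}$ concentrates on $\theta^*$ and then to invoke continuity of $D_K$ in its first argument, passing the limit through the difference term by term.

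First I would establish $\theta_{s^n} \to \theta^*$ in probability under $\set{H}^*$. Under the null hypothesis the observed sequence is generated by the irreducible transition matrix $T^*$, so the pair process $(S_t,S_{t+1})$ is itself an irreducible, positive-recurrent Markov chain on (the support of) $\set{S}^2$ with stationary distribution $\theta^*(s,s') = \tau^*(s)T^*(s,s')$. The ergodic theorem for finite irreducible Markov chains then gives $\theta_{s^n}(s,s') \to \theta^*(s,s')$ almost surely, hence in probability, for every pair $(s,s')$; as $\set{S}$ is finite this is convergence in $\Delta(\set{S}^2)$. The same argument yields $\tau_{s^n} \to \tau^*$, and since irreducibility forces $\tau^*(s) > 0$ for all $s$, the empirical transition matrix $T_{s^n} = \theta_{s^n}/\tau_{s^n}$ is well defined for all large $n$ and converges to $T^*$.

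Next I would pass to the limit inside $D_K$ via the continuous mapping theorem. The map $\nu \mapsto D_K(\nu,\theta^*)$ is continuous at $\nu = \theta^*$ and $D_K(\theta^*,\theta^*) = 0$ by construction, so $D_K(\theta_{s^n},\theta^*) \to 0$ in probability. For the other term, provided $\thetaadv$ dominates $\theta^*$ (equivalently $D_K(\theta^*,\thetaadv) < \infty$, the standing finiteness hypothesis of Theorem~\ref{theorem:chernoff}), the map $\nu \mapsto D_K(\nu,\thetaadv)$ is continuous at $\nu = \theta^*$, giving $D_K(\theta_{s^n},\thetaadv) \to D_K(\theta^*,\thetaadv)$ in probability. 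Subtracting the two limits yields the claim.

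The hard part will be the continuity argument for $D_K(\cdot,\thetaadv)$ at $\theta^*$: because $D_K$ is built from the ratios $T_{s^n}(s,s') = \theta_{s^n}(s,s')/\tau_{s^n}(s)$ and the logarithms $\log(T_{s^n}/\Tadv)$, one must rule out blow-up both from vanishing denominators and from entries where $\Tadv$ is small. I would control this by restricting attention to the support of $\theta^*$: irreducibility guarantees $\tau^*(s) > 0$ everywhere, and the finiteness assumption $D_K(\theta^*,\thetaadv) < \infty$ guarantees $\Tadv(s,s') > 0$ wherever $T^*(s,s') > 0$, so on a neighbourhood of $\theta^*$ all relevant quantities stay bounded away from the singularities and $D_K(\cdot,\thetaadv)$ is genuinely continuous there. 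Care is also needed because for finite $n$ some pairs may have $\theta_{s^n}(s,s') = 0$; this is harmless, since the corresponding summands vanish under the convention $0\log 0 = 0$ and drop out in the limit.
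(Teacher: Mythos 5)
The paper states Lemma~\ref{convergence likelihood ratio} without giving any proof, so there is nothing to compare your argument against; what matters is whether your proof stands on its own, and it does. The ergodic theorem for the irreducible pair chain $(S_t,S_{t+1})$ gives $\theta_{s^n}\to\theta^*$ almost surely, and your continuity argument for the two $D_K$ terms is sound provided you make explicit (as you essentially do) that under $\set{H}^*$ the empirical measure $\theta_{s^n}$ is almost surely supported on $\mathrm{supp}(\theta^*)$ --- this is needed not only to keep $D_K(\cdot,\thetaadv)$ finite but also to keep $D_K(\cdot,\theta^*)$ finite and continuous at $\theta^*$, since a measure near $\theta^*$ that charges a pair with $T^*(s,s')=0$ would make that term blow up. One simplification worth noting: the difference $D_K(\theta_{s^n},\thetaadv)-D_K(\theta_{s^n},\theta^*)$ equals $\sum_{s,s'}\theta_{s^n}(s,s')\log\bigl(T^*(s,s')/\Tadv(s,s')\bigr)$ exactly, because the $\log T_{s^n}$ contributions cancel (this is visible in the derivation of \eqref{eq:log-likelihood} in Appendix~\ref{app:log-likelihood}). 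The normalized log-likelihood is therefore a fixed linear functional of $\theta_{s^n}$, and the claim follows in one step from $\theta_{s^n}\to\theta^*$ together with boundedness of the coefficients on $\mathrm{supp}(\theta^*)$, without ever needing continuity of the individual (nonlinear, ratio-based) maps $\nu\mapsto D_K(\nu,\cdot)$. Your term-by-term route is correct but does more work than the statement requires.
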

\begin{definition}
    For a fixed $n$ and $\delta > 0$, a sequence $s^n \in \set{S}^n$ is said to be $D_K$-typical if and only if 
    \begin{align*}
    D_K(\theta^*,\thetaadv) - \delta
    \leq \frac{1}{n-1}\log{\frac{P^*_n(s^n)}{\Padv_n(s^n)}}
    \leq D_K(\theta^*,\thetaadv) + \delta.
    \end{align*}
\end{definition}
The set of $D_K$ typical sequences is the $D_K$-typical set $A^{(n)}_{\delta}(P^*_n,\Padv_n)$.
\begin{lemma} \label{lemma: 4 properties}
Under the null hypothesis,
\begin{enumerate}
    \item For $s^n=(s_1,s_2,\dots,s_n) \in A^{(n)}_{\delta}(P^*_n,\Padv_n)$,
    \begin{align*}
        P^*_n(s^n)&2^{-(n-1)(D_K(\theta^*,\thetaadv)+\delta)} \leq \Padv_n(s^n)
        \leq P^*_n(s^n)2^{-(n-1)(D_K(\theta^*,\thetaadv)-\delta)}.
    \end{align*}
    \item \label{eq:2nd property} $P^*_n(A^{(n)}_{\delta}(P^*_n,\Padv_n)) > 1-\delta$, for $n$ sufficiently large.
    \item \label{eq:3rd property}$\Padv_n(A^{(n)}_{\delta}(P^*_n,\Padv_n)) < 2^{-(n-1)(D_K(\theta^*,\thetaadv)-\delta)}$.
    \item $\Padv_n(A^{(n)}_{\delta}(P^*_n,\Padv_n)) < (1-\delta)2^{-(n-1)(D_K(\theta^*,\thetaadv)+\delta)}$, for $n$ sufficiently large.
\end{enumerate}
\end{lemma}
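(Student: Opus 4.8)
The plan is to follow the classical typical-set (AEP-style) template from the proof of the Chernoff--Stein lemma, with all of the Markov dependence absorbed into Lemma~\ref{convergence likelihood ratio}. The four claims separate cleanly: property~1 is purely algebraic, property~2 is the one genuinely probabilistic input, and properties~3 and~4 are obtained by summing property~1 over the typical set. For property~1 I would start directly from the two inequalities defining $A^{(n)}_{\delta}(P^*_n,\Padv_n)$, namely $D_K(\theta^*,\thetaadv)-\delta \le \frac{1}{n-1}\log\frac{P^*_n(s^n)}{\Padv_n(s^n)} \le D_K(\theta^*,\thetaadv)+\delta$. Multiplying through by $(n-1)$, exponentiating base~$2$, and solving for $\Padv_n(s^n)$ yields both stated bounds at once, with no probabilistic content.

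For property~2 I would invoke Lemma~\ref{convergence likelihood ratio}: under $\set{H}^*$, the normalized log-likelihood ratio $\frac{1}{n-1}\log\frac{P^*_n(s^n)}{\Padv_n(s^n)}$ converges in probability to $D_K(\theta^*,\thetaadv)$. Since membership of $s^n$ in the $D_K$-typical set is by definition exactly the event that this ratio lies within $\delta$ of its limit, the $P^*_n$-probability of the typical set tends to $1$, so $P^*_n(A^{(n)}_{\delta}(P^*_n,\Padv_n)) > 1-\delta$ for all sufficiently large $n$. This is the only step that uses the probabilistic model; everything else is deterministic bookkeeping.

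For properties~3 and~4 I would sum property~1 over $s^n \in A^{(n)}_{\delta}(P^*_n,\Padv_n)$. The right-hand bound of property~1 gives $\Padv_n(A^{(n)}_{\delta}) = \sum_{s^n \in A^{(n)}_{\delta}} \Padv_n(s^n) \le 2^{-(n-1)(D_K(\theta^*,\thetaadv)-\delta)}\,P^*_n(A^{(n)}_{\delta})$, and bounding $P^*_n(A^{(n)}_{\delta})\le 1$ produces property~3. Using instead the left-hand bound of property~1 together with the estimate $P^*_n(A^{(n)}_{\delta})>1-\delta$ from property~2 gives the matching factor $(1-\delta)\,2^{-(n-1)(D_K(\theta^*,\thetaadv)+\delta)}$ of property~4; here the dependence on property~2 is precisely why this last claim, like property~2 itself, holds only for $n$ sufficiently large.

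The main obstacle is not any of these four bounds---given Lemma~\ref{convergence likelihood ratio} they are essentially mechanical---but rather in justifying Lemma~\ref{convergence likelihood ratio} for dependent samples. The i.i.d.\ weak law does not apply to a Markov trajectory, so one must instead appeal to the ergodic convergence of the empirical transition frequencies $\theta_{s^n}\to\theta^*$ for an irreducible finite chain, which the recurrence assumption on the MDP guarantees. Combined with the representation $L(s^n)=D_K(\theta_{s^n},\thetaadv)-D_K(\theta_{s^n},\theta^*)$ from~(\ref{eq:log-likelihood}), the continuity of $D_K$ in its first argument, and $D_K(\theta^*,\theta^*)=0$, this yields the required limit $D_K(\theta^*,\thetaadv)$ and closes the chain of reasoning.
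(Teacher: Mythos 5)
Your proposal follows essentially the same route as the paper's own proof: property~1 directly from the definition of the $D_K$-typical set, property~2 from Lemma~\ref{convergence likelihood ratio}, and properties~3 and~4 by summing the two sides of property~1 over the typical set and invoking $P^*_n(A^{(n)}_{\delta})\leq 1$ and property~2, respectively (your added remark on justifying Lemma~\ref{convergence likelihood ratio} via ergodicity of the irreducible chain is a sensible supplement the paper omits). One small point worth flagging: the summation argument for property~4---yours and the paper's alike---produces a \emph{lower} bound $\Padv_n(A^{(n)}_{\delta}) > (1-\delta)2^{-(n-1)(D_K(\theta^*,\thetaadv)+\delta)}$, so the `$<$' in the statement of property~4 appears to be a typo for `$>$'.
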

\begin{proof}
The first property comes from the definition of $D_K$-typical set. The second property comes from Lemma~\ref{convergence likelihood ratio}. The third property is given by

    \begin{align*}
        \Padv_n(A^{(n)}_{\delta}(P^*_n,\Padv_n))  
        &= \sum_{s^n \in A^{(n)}_{\delta}(P^*_n,\Padv_n)} \Padv_n(s^n) \\
        &\text{by property 1} \\
        &\leq \sum_{s^n \in A^{(n)}_{\delta}(P^*_n,\Padv_n)} P^*_n(s^n)2^{-(n-1)(D_K(\theta^*,\thetaadv)-\delta)} \\
        &= 2^{-(n-1)(D_K(\theta^*,\thetaadv)-\delta)} P^*_n(A^{(n)}_{\delta}(P^*_n,\Padv_n)) \\
        &\leq 2^{-(n-1)(D_K(\theta^*,\thetaadv)-\delta)} .
    \end{align*}

    The last property has a similar proof:

    \begin{align*}
        \Padv_n(A^{(n)}_{\delta}(P^*_n,\Padv_n))  
        &= \sum_{s^n \in A^{(n)}_{\delta}(P^*_n,\Padv_n)} \Padv_n(s^n) \\
        &\text{by property 1} \\
        &\geq \sum_{s^n \in A^{(n)}_{\delta}(P^*_n,\Padv_n)} P^*_n(s^n)2^{-(n-1)(D_K(\theta^*,\thetaadv)+\delta)}  \\
        &= 2^{-(n-1)(D_K(\theta^*,\thetaadv)+\delta)} P^*_n(A^{(n)}_{\delta}(P^*_n,\Padv_n)) \\
        &\text{ by property 2}\\
        &> (1-\delta)2^{-(n-1)(D_K(\theta^*,\thetaadv)\delta)} .
    \end{align*}
\end{proof}
\begin{lemma} \label{lemma: probability other distr}
    Let $B_n \subset \set{X}^n$ be any set of state sequences such that $P^*_n(B_n)>1-\delta$. Let $\Padv_n$ be any other distribution such that $D_K(\theta^*,\thetaadv) < \infty$.
    Then 
    \begin{equation*}
    \Padv_n(B_n) > (1-2\delta)2^{-(n-1)(D_K(\theta^*,\thetaadv)+\delta)}
    \end{equation*}
\end{lemma}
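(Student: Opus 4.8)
The plan is to follow the classical Chernoff--Stein argument (as in the i.i.d.\ case) but using the $D_K$-typical set $A^{(n)}_{\delta}(P^*_n,\Padv_n)$ established in Lemma~\ref{lemma: 4 properties}. The guiding idea is that although $B_n$ is an arbitrary high-probability set under $P^*_n$, most of its $P^*_n$-mass must sit on sequences that are also $D_K$-typical, and on such sequences the likelihood ratio between the two hypotheses is pinned down. This is what lets us transfer the lower bound from $P^*_n$ to $\Padv_n$ at the cost of the exponential factor $2^{-(n-1)(D_K(\theta^*,\thetaadv)+\delta)}$.

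First I would intersect $B_n$ with the typical set $A := A^{(n)}_{\delta}(P^*_n,\Padv_n)$. By hypothesis $P^*_n(B_n) > 1-\delta$, and by property~\ref{eq:2nd property} of Lemma~\ref{lemma: 4 properties} we have $P^*_n(A) > 1-\delta$ for $n$ sufficiently large. A union bound (equivalently, inclusion--exclusion together with $P^*_n(B_n \cup A) \le 1$) then yields
\[
P^*_n(B_n \cap A) \;\ge\; P^*_n(B_n) + P^*_n(A) - 1 \;>\; 1 - 2\delta.
\]

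Next I would restrict the $\Padv_n$-mass of $B_n$ to this intersection and invoke the pointwise likelihood bound. Since $\Padv_n(B_n) \ge \Padv_n(B_n \cap A)$, and every $s^n \in A$ satisfies the lower bound $\Padv_n(s^n) \ge P^*_n(s^n)\,2^{-(n-1)(D_K(\theta^*,\thetaadv)+\delta)}$ from property~1 of Lemma~\ref{lemma: 4 properties}, summing over $B_n \cap A$ gives
\[
\Padv_n(B_n) \;\ge\; 2^{-(n-1)(D_K(\theta^*,\thetaadv)+\delta)}\, P^*_n(B_n \cap A) \;>\; (1-2\delta)\,2^{-(n-1)(D_K(\theta^*,\thetaadv)+\delta)},
\]
which is exactly the claimed inequality.

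The only genuine subtlety---and the step I would watch most carefully---is the justification of property~\ref{eq:2nd property} of Lemma~\ref{lemma: 4 properties} in the Markov setting, namely that $P^*_n(A) > 1-\delta$ for large $n$. This rests on Lemma~\ref{convergence likelihood ratio}, the convergence in probability of the normalized log-likelihood ratio to $D_K(\theta^*,\thetaadv)$ under $\set{H}^*$; the assumption $D_K(\theta^*,\thetaadv) < \infty$ (which requires the support of $\Tadv$ to cover that of $T^*$) is what keeps the ratio finite and the typical set nontrivial. Once that convergence is in hand, the remainder is the elementary union-bound-plus-summation argument above, with no further obstacle.
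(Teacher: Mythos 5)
Your proof is correct and follows essentially the same route as the paper's: intersect $B_n$ with the $D_K$-typical set, use property~2 of Lemma~\ref{lemma: 4 properties} plus inclusion--exclusion to get $P^*_n(B_n \cap A) > 1-2\delta$, and then apply the pointwise likelihood bound from property~1 and sum. Your observation that the bound only holds for $n$ sufficiently large (inherited from property~2) is a valid caveat that the lemma statement itself omits.
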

\begin{proof}
    Denote $A^{(n)}_{\delta}(\Padv_n,P_2)$ by $A_n$. Combining $P^*_n(B_n)>1-\delta$ and $P^*_n(A_n)>1-\delta$ from (\ref{eq:2nd property}) gives $P^*_n(A_n \cap B_n)>1-2\delta$. Thus,

    \begin{align*}
        \Padv_n(B_n) &\geq \Padv_n(A_n \cap B_n) \\
        &=\sum_{s^n \in A_n \cap B_n} \Padv_n(s^n) \\
        &\geq \sum_{s^n \in A_n \cap B_n} P^*_n(s^n)2^{-(n-1)(D_K(\theta^*,\thetaadv)+\delta)} \\
        &=2^{-(n-1)(D_K(\theta^*,\thetaadv)+\delta)}\sum_{s^n \in A_n \cap B_n} P^*_n(s^n) \\
        &=2^{-(n-1)(D_K(\theta^*,\thetaadv)+\delta)} P^*_n(A_n \cap B_n) \\
        & \geq 2^{-(n-1)(D_K(\theta^*,\thetaadv)+\delta) }(1-2\delta)
    \end{align*}
\end{proof}

\begin{proof}[Proof of Theorem~\ref{theorem:chernoff}]
    Choose a sequence of sets $\set{A}_n=A^{(n)}_{\varepsilon}(P^*_n,\Padv_n)$, where $0 < \varepsilon \leq \delta$. Lemma~\ref{lemma: 4 properties} states that $P^*_n(\set{A}^c_n)< \varepsilon \leq \delta$ for $n$ large enough, meaning that the relative entropy typical set satisfies the bound on the type I error. The same lemma also shows that
    \begin{equation*}
      \lim_{n \to \infty} \frac{1}{n-1} \log \Padv_n(\set{A}_n) < -D_K(\theta^*,\thetaadv) + \varepsilon,
    \end{equation*}
    providing a lower bound for the error exponent for error type II. Now we show that no other sequence of sets can do better than this bound. Consider any sequence of sets $\set{B}_n$ with $P^*_n(\set{B}_n) > 1-\varepsilon \geq 1-\delta$, then Lemma~\ref{lemma: probability other distr} gives $\Padv_n(\set{B}_n) > (1-2\varepsilon)2^{-(n-1)(D_K(\theta^*,\thetaadv)+\varepsilon)}$, and 
    \begin{align*}
    \lim_{n \to \infty} \frac{1}{n-1} \log \Padv_n(\set{B}_n) 
    > \lim_{n \to \infty} \frac{1}{n-1} \log (1-2\varepsilon) -(D_K(\theta^*,\thetaadv)+\varepsilon) = -D_K(\theta^*,\thetaadv)-\varepsilon.
    \end{align*}
    Combining these two inequalities and letting $\varepsilon \to 0$ proves (\ref{eq: thrm3 main result}).
\end{proof}

\begin{proof}[Proof of Theorem~\ref{theorem: optimal test}]
    By the upper bound of Theorem~\ref{theorem: sanov},

    \begin{align*}
        \lim_{n \to \infty} \frac{1}{n} \log \alpha_n 
        &=
        \lim_{n \to \infty} \frac{1}{n} \log P^*_n(\theta_{s^n} \in \{\nu: D_K(\nu, \theta^*) \geq \eta \} ) \\ &
        = - \inf_{ \{ \nu : D_K(\nu, \theta^*) \geq \eta \}\cap \set{M} } D_K(\nu, \theta^*) \\ &
        = - \eta,
    \end{align*}

    as long as meaning that constraint (\ref{eq: alpha constraint}) is satisfied. We now analyze the probability of error type II under test $S^*$ for the alternative hypothesis for any distribution $\Padv$. By the same upper bound as before we get

    \begin{equation}\label{eq: bound optimal test}
    \begin{aligned}
        \lim_{n \to \infty} \frac{1}{n} \log \beta_n 
        &= \lim_{n \to \infty} \frac{1}{n} \log \Padv_n(\theta_{s^n} \in \{\nu: D_K(\nu, \theta^*) < \eta \}) \\ &
        = - \inf_{ \{ \nu : D_K(\nu, \theta^*) < \eta \} \in \set{M}} D_K(\theta, \theta^\mu) \\ &
        \overset{\Delta}{=} - J(\eta). 
    \end{aligned}
    \end{equation}

    where we let the infimum over an empty set be $\infty$. We then compare these error exponents with those of a test $\set{S}$ determined by the binary function $\set{F}:\Delta(\set{S}^2) \times \mathbb{Z}^+ \mapsto \{0,1\}$ which does not follow the same type of threshold. Suppose that for some $\delta > 0$ and for some $n$, there exists a $\nu \in \set{M}$ such that $D_K(\nu,\theta^*) \leq \eta-\delta$ while $\set{F}(\nu,n)=1$, meaning that the test rejects the null hypothesis even though the distribution of $\eta$ is "closer" to the null hypothesis's distribution in the sense of $D_K$ than the threshold. Then, applying Theorem~\ref{theorem: sanov}, with $\Gamma=\{ \nu \}$ we obtain

    \begin{equation*}
        \lim_{n\to\infty} \frac{1}{n}\log{\alpha_n} = \lim_{n \to \infty} \frac{1}{n} \log P^*_n \{ \theta_{s^n} = \nu \} = - D_K(\nu,\theta^*)
    \end{equation*}


    However, to satisfy constraint (\ref{eq: alpha constraint}) we require that $\eta-\delta \geq \eta$, which contradicts $\delta > 0$. We can conclude that then, for every $\delta > 0$ and for all $n$ large enough,

    \begin{equation*}
    \set{M} \cap \{\nu: D_K(\nu,\theta^*) \leq \eta - \delta \} \subseteq \set{M} \cap \{ \nu: \set{S}(\nu,n)=0 \}
    \end{equation*}

    And so, for every $\delta>0$,

    \begin{align*}
        \limsup_{n\to \infty} \frac{1}{n} \log \beta_n &\geq \liminf_{n\to \infty} \frac{1}{n} \log \beta_n = \\
        &= \liminf_{n\to \infty} \frac{1}{n} \log \Padv_n(\theta_{s^n} \in \{ \nu : \set{S}(\nu,n)=0 \} \cap \set{M}) \\
        &\geq \liminf_{n\to \infty} \frac{1}{n} \log \Padv_n(\theta_{s^n} \in \{ \nu : D_K(\nu,\theta^*) < \eta - \delta \} \cap \set{M}) \\
        &\geq - \inf_{ \{ \nu: D_K(\nu,\theta^*) < \eta - \delta \} \cap \set{M}} D_K(\nu,\thetaadv)
    \end{align*}

    Thus, we can select the $\delta$ providing the tightest bound, which combined with Theorem~\ref{theorem: sanov} gives
    
    \begin{equation}\label{eq: bound general test}
    \begin{aligned}
        \limsup_{n\to \infty} \frac{1}{n} \log \beta_n 
        &\geq -\inf_{\delta > 0} \inf_{ \{ \nu: D_K(\nu,\theta^*) < \eta - \delta \} \cap \set{M}} D_K(\nu,\thetaadv) \\
        &\geq -\inf_{\delta > 0} J(\eta-\delta) \\
        &= -J(\eta) 
    \end{aligned}
    \end{equation}

    which states that (\ref{eq: bound general test}), the exponent achieved by test $\set{F}^*$ is the minimum bound among all possible tests.
\end{proof}

\end{document}